\theoremstyle{definition} 
\theoremstyle{definition} 
\newtheorem {theorem} {Theorem}
\newcommand{\kb}[1]{\ket{#1}\bra{#1}}
\newcommand{\bk}[1]{\braket{#1|#1}}
\newcommand{\MR}{\texttt{Measure and Resend}}
\newcommand{\R}{\texttt{Reflect}}
\newcommand{\psqkd}{\Pi^{\texttt{SQKD}}}
\newcommand{\pe}{\Pi^{\texttt{ent}}}
\newcommand{\ps}{\Pi^{\texttt{OW}}}
\newcommand{\rw}{\texttt{\textbf{Rw}}}
\newcommand{\Z}{\mathcal{Z}}
\newcommand{\X}{\mathcal{F}}
\newcommand{\tdl}{\left|\left|}
\newcommand{\tdr}{\right|\right|}
\title{High-Dimensional Semi-Quantum Cryptography}
\author[1]{Hasan Iqbal}
\author[1]{Walter O. Krawec\footnote{Email: \texttt{walter.krawec@uconn.edu}}}
\affil[1]{\small{Department of Computer Science and Engineering}\\\small{University of Connecticut}\\\small{Storrs, CT 06269 USA}}
\begin{document}
\maketitle
\begin{abstract}
A semi-quantum key distribution (SQKD) protocol allows two users, one of whom is restricted in their quantum capabilities, to establish a shared secret key, secure against an all-powerful adversary.  In this paper, we design a new SQKD protocol using high-dimensional quantum states and conduct an information theoretic security analysis.  We show that, similar to the fully-quantum key distribution case, high-dimensional systems can increase the noise tolerance in the semi-quantum case.  Along the way, we prove several general security results which are applicable to other SQKD protocols (both high-dimensional ones and standard qubit-based protocols).
\end{abstract}

\section{Introduction}
It is well known that secure key distribution, using only classical communication, is impossible unless computational assumptions are placed on the power of the adversary.  If both $A$ and $B$ are able to communicate using quantum resources, however, perfect security is possible and the only assumption on the adversary required is that she obey the laws of quantum physics.  Quantum Key Distribution (QKD) protocols allow two parties (Alice, $A$, and Bob, $B$) to establish a shared secret key, secure against an all-powerful adversary (Eve, $E$).  Since the first QKD protocol developed by Bennett and Brassard in 1984 (the so-called BB84 protocol \cite{QKD-BB84}), both the theory and practice of QKD has been increasing dramatically.  For a general survey of QKD, both the theory and practice, the reader is referred to \cite{qkd-survey,qkd-survey2,qkd-survey3}.

Since perfect security for key distribution is impossible if both $A$ and $B$ are restricted to classical communication while it \emph{is} possible if both $A$ and $B$ are ``quantum capable,'' a natural question to ask is ``what is the middle ground?''  A communication model designed to help answer this question is the so-called semi-quantum model of cryptography, first introduced in 2007 by Boyer et al. \cite{SQKD-first-PRL}.  In this model, one party is ``fully quantum'' in that they can do anything the protocol requires of them so long as it is possible according to quantum mechanics.  The second party, however, is restricted to operations which are mathematically equivalent to classical communication.  Thus, one party is quantum while the other party is ``classical.''  Since its original introduction in 2007, there have been numerous new semi-quantum key distribution (SQKD) protocols developed \cite{SQKD-second,SQKD-mirror,SQKD-lessthan4,SQKD-no-measure1,med-sqkd1,med-sqkd2,med-sqkd3}.  There have also been extensions to the model beyond basic key distribution including secret sharing \cite{SQKD-secret1,SQKD-secret2,SQKD-secret3} and state comparison \cite{state-comp1,state-comp2,state-comp3}.

(S)QKD protocols operate in two stages: first is a \emph{quantum communication stage} whereby users utilize the quantum channel, along with the classical authenticated channel (which is a classical communication channel that is authenticated, but not secret), to establish a \emph{raw key}.  $A$ and $B$ both have their own raw key which  is a string of classical bits that are partially correlated (there may be some errors due to an adversary's attack or just natural noise) and partially secret (an adversary may have some information on this raw key).  Thus this raw-key by itself cannot be used directly as a secret key.  Users, therefore, must run a second stage where, at a minimum, they will execute an error correction protocol using the authenticated classical channel (leaking additional information to $E$) followed by a privacy amplification protocol which takes the error-corrected raw key and hashes it down to a smaller secret key.  The relative size of the secret key compared to the initial raw-key (called the \emph{key-rate} of the protocol) is a statistic of great importance in QKD research and bounding it, as a function of observed noise in the quantum channel, is the main challenge in any (S)QKD security proof.  A related statistic is the \emph{noise tolerance} of the protocol which specifies the noise threshold after which the adversary has too much information and so users must simply abort (e.g., BB84's noise tolerance is $11\%$ \cite{QKD-BB84-rate1,renner-keyrate}).  Before this tolerance threshold is reached, privacy amplification is able to give a positive, though potentially small, key (the size of the secret key decreases as the noise increases due to the direct correlation between noise and adversarial information gain).

As far as semi-quantum cryptography is concerned, there have been, by now, several proofs of security based on key-rate computations for SQKD protocols and, rather surprisingly, despite the limitations on one of the users, along with the increased attack strategy space afforded to the adversary (due to the requirement of a two-way quantum channel allowing quantum resources to travel from $A$, to $B$, then back to $A$), noise tolerances compare favorably to several fully quantum protocols.  In particular, in \cite{QKD-Tom-Krawec-Arbitrary}, the noise tolerance of the original Boyer et al., protocol can approach $11\%$, the same as BB84.  However, this optimistic result required looking at numerous \emph{mismatched statistics} (a technique introduced in \cite{QKD-Tom-First}, extended for one-way channels in \cite{QKD-Tom-KeyRateIncrease,QKD-Tom-KeyRateMismatchedDistill,QKD-Tom-BB84NarrowAngle}, and expanded for two-way quantum channels in \cite{QKD-Tom-Krawec-Arbitrary}).  Without these statistics, and only looking at the error rate, the Boyer et al., protocol has a noise tolerance of $6.14\%$, though this is only a lower-bound and future refinements to the security proof techniques may improve this to the $11\%$ found with mismatched measurements.  Currently, the best-known noise tolerance for an SQKD protocol is from \cite{SQKD-high-noise} which can attain a tolerance of $17.8\%$ or even as high as $26\%$ for certain, practical, quantum channels (a result comparable to BB84 with Classical Advantage Distillation \cite{CAD-1,BB84-CAD}).  Again, this high tolerance bound required looking at numerous  mismatched measurements.

Designing protocols with increased noise tolerance is an important task.  Encouraged by recent theoretical successes in fully-quantum QKD using high-dimensional carriers \cite{high-dim0,high-dim1,QKD-high-dim-dep,QKD-high-dim-50,QKD-QW,high-dim2,high-dim3,high-dim4,high-dim5,high-dim6,high-dim7} and, in particular, these protocols' ability to withstand high channel noise levels (some reaching $50\%$ as the dimension of the quantum carrier approaches infinity), we ask, can a high dimensional quantum communication channel also benefit semi-quantum key distribution?  Or does this substantial improvement in noise tolerance require two fully quantum users to truly harness?  We note that a high-dimensional SQKD protocol was introduced in \cite{QKD-QW}, using a quantum walk as the information carrier, however a noise tolerance computation was not performed due to the great complexity of that protocol and so this question still remained open (\emph{though the methods we develop in this paper may be applicable to other protocols such as this quantum-walk based protocol}).

In this paper, we show high dimensional states can benefit semi-quantum communication and in doing so, make several contributions in this work.  We design a new high-dimensional SQKD protocol and conduct an information theoretic security analysis allowing us to compute a lower-bound on its key-rate based on observed channel noise.  Our security proof introduces several new techniques which may be applicable to other (S)QKD protocols (both standard qubit-based and future-developed high-dimensional ones including, perhaps, the quantum-walk SQKD protocol developed in \cite{QKD-QW}).  Semi-quantum protocols rely on a two-way quantum channel giving the adversary a greater attack strategy space making security analyses for semi-quantum protocols difficult, especially in higher dimensions (all past work involving key-rate computations have been for the qubit case). As such, our new methods may prove beneficial not only for other semi-quantum protocols, but also fully-quantum protocols reliant on a two way quantum channel (of which there are several \cite{lm05,pp-sdc,lm05-ent,CV-floodlight,CV4,CV5}).  Finally, we evaluate our protocol's performance and determine its noise tolerance for varying dimensions and show that, indeed, high dimensional carriers do benefit the noise tolerance of semi-quantum protocols.  We show that our protocol's noise tolerance tends to $30\%$ as the dimension increases; this result is without requiring any mismatched statistics.  While this is not as high as the $50\%$ achieved in the fully quantum case, this is still higher than any other SQKD protocol to-date and, considering that this is a semi-quantum protocol, where one participant is severely limited in their capabilities, is still a very positive result.  This work paves the way for future research in higher-dimensional systems for semi-quantum or two-way quantum cryptography.  By analyzing semi-quantum protocols with high dimensional systems, we further map out the ``gap'' between classical and quantum communication systems.

In this work, we are primarily concerned with a theoretical protocol and not practical attacks or complications involving its implementation.  We note that several fully quantum high-dimensional QKD protocols have been experimentally implemented and the experimental generation of high dimensional entangled states has seen rapid progress lately \cite{high-dim5-exp,high-dim5-exp2,high-dim-exp3,high-dim-exp4}.  However, we do not concern ourselves with practical implementations of this system. Instead, we are solely interested in understanding how high-dimensional quantum states may benefit the semi-quantum model of cryptography, leaving practical issues as future work.

\section{Preliminaries}

If $\rho_{AB}$ is a \emph{density operator} (i.e., a Hermitian positive semi-definite operator of unit trace) acting on Hilbert space $\mathcal{H}_A\otimes\mathcal{H}_B$, then we write $\rho_A$ to mean the partial trace over the $B$ portion, namely $\rho_A = tr_B\rho_{AB}$.  Similarly for other, or multiple, systems.  Given a system $\rho_{AB}$ which is unmeasured, and an orthonormal basis $V = \{\ket{v_1},\cdots,\ket{v_d}\}$ for the $A$ system (which is of dimension $d$), then we write $\rho_{A^VB}$ to mean the density operator resulting from a measurement of the $A$ register in this $V$ basis.

We use $H(A)_{\rho}$ to mean the entropy function - either the classical Shannon entropy (if $\rho$ is a classical state) or the quantum von Neumann entropy (the context will always be clear which we mean).  Note this implies $\rho$ is a density operator acting on \emph{at least} some $A$ register (if it acts on others, we first trace out those additional spaces and compute the entropy in the resulting $A$ space only).  The von Neumann entropy is defined: $H(A)_\rho = H(\rho_A) = -tr(\rho_A\log\rho_A)$ (where all logarithms in this paper are base two).  The conditional entropy is denoted $H(A|B)_{\rho}$ and defined $H(AB)_\rho - H(A)_\rho$.  If $\rho_{AB}$ is an unmeasured quantum state, then by $H(A^V|B)_{\rho}$ we mean the conditional entropy in the operator resulting from measuring the $A$ portion of $\rho_{AB}$ in the $V$ basis (the $B$ portion remains unmeasured).  By $H(A^V|B^V)_{\rho}$ we  mean the same, but after also measuring the $B$ portion (in which case the entire state is classical and so Shannon entropy is used).  If the context is clear, we may drop the subscript.  Finally, for a real number $x\in[0,1]$ we write $H(x)$ to mean the binary Shannon entropy, namely $H(x) = -x\log x - (1-x)\log(1-x)$.

Given operator $X$, we write $||X||$ to mean the trace distance.  If $X$ is Hermitian and finite dimensional, then this is simply the sum of the absolute values of the eigenvalues of $X$.

Finally, let $\rho_{ABE}$ be a quantum state where the $A$ portion is $d$-dimensional and let $V = \{\ket{v_1},\cdots,\ket{v_d}\}$ and $U=\{\ket{u_1},\cdots,\ket{u_d}\}$ be two orthonormal bases.  An important \emph{entropic uncertainty relation} which will be used later, was proven in \cite{QKD-uncertainty} and states that for any quantum state $\rho_{ABE}$, it holds that:
\begin{equation}\label{eq:uncertainty}
H(A^V|E)_\rho + H(A^U|B) \ge -\log c,
\end{equation}
where $c = \max_{i,j}|\braket{v_i|u_j}|^2$.  This will be used in our proof of security later.

\subsection{Semi-Quantum Cryptography}
The semi-quantum model, as introduced in \cite{SQKD-first-PRL}, consists of at least one ``fully-quantum'' user (typically $A$) and one ``classical'' or ``semi-quantum'' user (typically $B$).  This classical user is only allowed to interact with the quantum channel in a very restricted way.  In particular, he can choose to do one of two things on receiving any quantum state from $A$:
\begin{itemize}
  \item $\R$: If he chooses this option, he will disconnect from the quantum channel, creating a loop back to $A$.  In this case, the quantum user is simply ``talking to herself'' over a large, looped, quantum channel.
  \item $\MR$: If he chooses this option, he will perform a measurement of the quantum state in a single, publicly known, basis (typically the computational basis).  Based on his measurement result, he will then send a new quantum state, prepared in this same basis, back to $A$.
\end{itemize}

Clearly, if both users are semi-quantum and can only perform these two operations, the system is mathematically equivalent to a classical communication protocol as both users would be restricted to only operating directly in a single, publicly known, basis.  Thus, the interest in semi-quantum cryptography is to see how security holds when one user is quantum, but the other is classical according to the above functionality.

Note that we are not considering practical device security in this work and are only interested in the theoretical properties of semi-quantum communication.  Thus we do not concern ourselves with such attacks as the photon tagging attack \cite{SQKD-photon-tag,SQKD-photon-tag-comment} or multi-photon attacks (especially problematic when $B$ chooses $\MR$ as he must re-prepare qubits in the observed state).  Though interesting, these are outside the scope of this work - techniques from \cite{SQKD-mirror} may prove beneficial to securing our protocol against these attacks but we leave this as interesting future work.

As mentioned earlier, (S)QKD protocols operate, first, through a quantum communication stage.  This stage utilizes the quantum communication channel and the authenticated classical channel to output a \emph{raw-key} of size $N$ bits.  From this error correction and privacy amplification are run outputting a secret key of size $\ell(N)$ bits.  The \emph{key-rate} is defined to be the ratio $\ell(N)/N$.  We are interested in the theoretical asymptotic limit.  In this case, assuming collective attacks (i.i.d. attacks where $E$ is free to store a quantum memory system for measurement at any future point in time \cite{qkd-survey}), it was shown in \cite{renner-keyrate,winter-keyrate} that:
\begin{equation}
r = \lim_{N\rightarrow \infty}\frac{\ell(N)}{N} = \inf(H(A|E)_\rho - H(A|B)_\rho),
\end{equation}
where $\rho_{ABE}$ is a density operator describing a single iteration of the quantum communication stage, conditioned on that iteration being used to distill raw key material (i.e., not on an iteration used only for error checking or an iteration that is later discarded due to an incompatible basis choice).  The infimum is over all collective attacks that induce the observed noise statistics.  Above, the $A$ and $B$ registers are the actual classical raw-key bit registers and only the $E$ portion is quantum.  It is this entropy equation, and in particular the von Neumann entropy $H(A|E)$, that we are interested in computing and is the main challenge (computing $H(A|B)$ is generally trivial given the observed noise statistics).

Our protocol uses higher-dimensional systems and, as such, we must define the bases we work with.  For the classical user, we will use the computational basis of dimension $2^n$, namely: $\{\ket{0\cdots 00}, \ket{0\cdots01}, \cdots, \ket{1\cdots 11}\}$ which, when needed to simplify  notation, we will also label equivalently as $\{\ket{0}, \ket{1}, \cdots, \ket{2^n-1}\}$.  We use $\Z$ to denote this basis.

The quantum user, of course, is not restricted to operating in only one basis and so we also define the following ``$\X$'' basis:
\begin{equation}
\X = \{\ket{F_0}, \ket{F_1}, \cdots, \ket{F_{2^n-1}}\},
\end{equation}
where $\ket{F_x} = \mathcal{F}\ket{x}$ and $\mathcal{F}$ is the quantum Fourier transform, namely:
\begin{equation}
\mathcal{F}\ket{x} = \frac{1}{\sqrt{2^n}}\sum_{y=0}^{2^n-1}\exp(-\pi i  x y/2^{n-1})\ket{y}.
\end{equation}

Of course, one may consider other bases that the quantum user may utilize.  However, our protocol will make use of both the $\Z$ and $\X$ bases.  Note that, for the classical user, if he chooses $\MR$ or $\R$, that operation is performed on an entire $n$-qubit signal state (e.g., he cannot reflect ``half'' the qubits and measure the other half in our model).

\section{The Protocol}
Our protocol is shown in Protocol \ref{prot:sqkd}.  The protocol operates by having $A$ send signals of $n$-qubits each.  For each iteration, $B$ will either $\MR$ the entire $n$-qubit state or he will $\R$ the entire state.  Whenever $A$ sends a $\Z$ basis state and $B$ chooses to $\MR$, they will add $n$ bits to their raw key.  Once a sufficiently large raw key has been established, standard error correction and privacy amplification are run.  In the next section we will compute a lower-bound on the key-rate of this protocol.  We will consider a noisy, but loss-less, quantum channel and ideal devices.  Practical security concerns, though interesting, are outside the scope of this work and would provide interesting future work.  Any collective attack against this protocol consists of two unitary operators $(U_F, U_R)$ where $U_F$ is applied in the forward channel and $U_R$ is applied in the reverse.

\begin{algorithm}
\caption{$n$-dimensional SQKD: $\psqkd$}\label{prot:sqkd}
\textbf{Public Parameters:} $n$: the number of qubits to send per signal; $p_M$, the probability of choosing $\MR$; $p_Z$, the probability of $A$ choosing the $\Z$ basis.

$ $\newline
\textbf{Quantum Communication Stage:} The quantum communication stage of the protocol will repeat the following until a sufficiently large raw-key has been distilled:
\begin{enumerate}
  \item With probability $p_Z$, $A$ prepares a randomly chosen $\Z$ basis state; otherwise she prepares a randomly chosen $\X$ basis state.  She records her choice of basis and the choice of state and sends the resulting $n$-qubit state to $B$.

\item $B$ chooses, with probability $p_M$ to $\MR$, measuring all $n$ qubits in the computational basis and recording the result and then resending the observed state back to $A$.  Otherwise, with probability $1-p_M$, he chooses $\R$ in which case he reflects all $n$ qubits back to $A$.

\item $A$ measures the returning $n$ qubit system in the same basis she used to prepare.

\item $A$ and $B$, using the authenticated classical channel, divulge their choices ($B$ his choice of ``$\MR$'' or ``$\R$'' and $A$ her choice of basis).  If $A$ chose the $\Z$ basis and $B$ chose $\MR$, they will use this iteration to contribute towards their raw key; namely, $B$ will append his $n$-bit measurement result string and $A$ will append her initial state she prepared to their respective raw-keys (in this case, $A$'s subsequent measurement result is not used).  We call this a \emph{key-distillation iteration}.  Otherwise, this iteration (along with a suitably chosen random subset of key-distillation iterations) may be used for error detection in the obvious way.
\end{enumerate}
\end{algorithm}

\section{Security Analysis}
We now analyze the security of our protocol.  As with other (S)QKD protocols, we show security against collective attacks.  We will comment on general attacks later.  Our security analysis extends ideas introduced in our conference paper \cite{krawec-conf} but to the higher-dimensional case and consists of two main parts: First, we will prove that it is sufficient to analyze a particular \emph{one-way} fully quantum protocol and, once security is proven there assuming the same channel observations are made, security of our SQKD protocol follows immediately.  \emph{This reduction is very general and can apply to other SQKD protocols}.  Thus, to analyze security of the two-way semi-quantum protocol, it suffices to consider a particular one-way protocol which is easier to analyze as $E$ only attacks once.  Second, we analyze the security of this one-way protocol through the use of entropic uncertainty relations, and continuity of conditional von Neumann entropy.  The techniques we develop in both steps are often general and may be applicable to other two-way (S)QKD protocols.

\subsection{Reduction to a One-Way Protocol}
In this section we show how certain SQKD protocols, of arbitrary dimensions, may be reduced to a one-way protocol.  Note that in \cite{lm05-ent}, a method of reducing two-way fully quantum protocols to one-way, entanglement based protocols was shown, however that method only applies if the original protocol admits a certain symmetry property which semi-quantum protocols necessarily lack (due to $B$'s use of $\MR$).  As a first step, we first consider an intermediate, two-way, SQKD protocol, which we denote by $\pe$.  This intermediate protocol is no longer prepare-and-measure, but instead has $A$ preparing entangled qudits and $B$ performing a CNOT gate whenever he chooses $\MR$.  The protocol is shown in Protocol \ref{prot:ent}.  It is not difficult to see that security of $\pe$ implies security of $\psqkd$ (i.e., $\pe \Rightarrow \psqkd$ where ``$\Rightarrow$'' means ``implies security of'').  Indeed, $A$'s prepare-and-measure scheme in $\psqkd$ is equivalent to her preparing the entangled state of $2n$ qubits $\ket{\psi_0} = \frac{1}{2^{n/2}}\sum_{a}\ket{a,a}$ and sending the right register (consisting of $n$ qubits) to $B$ while keeping the left-half to herself.  If $B$ chooses to reflect, this is nothing more than an identity operation whereas if he chooses to $\MR$, then by applying CNOT gates targeting his register and then measuring at some future time, this is equivalent to him measuring immediately.  Finally, when qubits return to $A$, she may measure both $n$ qubit registers in the same basis - standard arguments \cite{qkd-survey,SQKD-second} show that her measurement of the $A_1$ register is equivalent to her initially preparing the state she observes at this later point.  Furthermore, a collective attack against this protocol is identical to the $\psqkd$ case, namely two unitary attack operators $(U_F, U_R)$.

\begin{algorithm}
\caption{Entanglement-Based $n$-dimensional SQKD: $\pe$}\label{prot:ent}
\textbf{Public Parameters:} $n$: the number of qubits to send per signal; $p_M$, the probability of choosing $\MR$; $p_Z$, the probability of $A$ measuring in the $\Z$ basis.

$ $\newline
\textbf{Quantum Communication Stage:} The quantum communication stage of the protocol will repeat the following until a sufficiently large raw-key has been distilled:
\begin{enumerate}
  \item $A$ prepares the $2n$-qubit state: $\ket{\psi_0} = \frac{1}{\sqrt{2^n}}\sum_{a=0}^{2^n-1}\ket{a,a}_{A_1T}$
and sends the ``$T$'' portion to $B$.

\item $B$ chooses, with probability $p_M$ to $\MR$ in which case he applies the operator $CNOT^{\otimes n}$, acting on the $T$ space and his own private $B$ register (also of $n$ qubits).  Otherwise, with probability $1-p_M$, he chooses $\R$ and applies $I^{\otimes n}$ to the $T$ portion (thus, his $B$ register will remain independent of the system in this case).  Either way, the $T$ register is then returned to $A$.  Once returned, we rename the $T$ register as the $A_2$ register. 


\item $A$ chooses to measure in the $\Z$ basis (with probability $p_Z$) or the $\X$ basis (with probability $1-p_Z$).  She measures both the $A_1$ register and the returned $T$ register (now called the $A_2$ register) in the same basis (either both $\Z$ or both $\X$).  At this point, $B$ will measure his register in the $\Z$ basis if he chose $\MR$.

\item $A$ and $B$ divulge their choices ($B$ his choice of ``$\MR$'' or ``$\R$'' and $A$ her choice of basis).  If $A$ choose the $\Z$ basis and $B$ chose $\MR$, they will save their measurement results and append the resulting value (as a bit-string) to their respective raw-keys ($A$ will use her result from the $A_1$ register, discarding the $A_2$ register in this case).
\end{enumerate}
\end{algorithm}

Next, we introduce our one-way protocol, shown in Protocol \ref{prot:one-way} and denoted $\ps$.  At first glance, the two protocols, $\pe$ (which is semi-quantum and uses a two-way quantum channel) and $\ps$ (which is one-way and fully quantum) do not appear similar.  However, we will prove that security of $\ps$ implies security of $\pe$ (which, in turn, implies security of our actual protocol $\psqkd$).  We do this by showing that, for any attack against $\pe$, there exists an attack against $\ps$ which causes $E$ to gain as much information on the raw-key as in $\pe$ and, furthermore, the view according to $A$, $B$, and $E$ are identical in both cases (i.e., the two cases are indistinguishable).  Thus, if we analyze $\ps$ (which is easier to do since it is one-way), we automatically cover any attack against $\pe$.  Ultimately, this technique is an extension of a result in our conference paper \cite{krawec-conf} to the arbitrary, $N$-dimensional case (only the qubit, $N=2$ case was considered before).  However, beyond being more general, our proof here is also more refined as it does not require an additional ``simplification'' step that was necessary in \cite{krawec-conf}.

Let $N = 2^n$.  An attack against $\ps$ consists of a probability distribution $\{p(b)\}$ for all $b=0,1,\cdots, N-1$ along with a \emph{single} attack operator $U$ acting on $2n$ qubits and $E$'s quantum ancilla.  Note that $E$ gets to choose the values $p(b)$ which $B$ uses to prepare his states - thus, $E$ has partial control over $B$'s source device in $\ps$; the reason for this necessity will be apparent later in our proof.  We now prove it is sufficient to consider security of $\ps$ (in which case we have $\ps \Rightarrow \pe \Rightarrow \psqkd$).

\begin{algorithm}
\caption{One-Way $n$-dimensional QKD: $\ps$}\label{prot:one-way}
\textbf{Public Parameters:} $n$: the number of qubits to send per signal; $p_M$, the probability of choosing $\MR$; $p_Z$, the probability of $A$ measuring in the $\Z$ basis; $\{p(b)\}_{b=0}^{2^n-1}$, probability values set by the adversary but known to all parties.

$ $\newline
\textbf{Quantum Communication Stage:} The quantum communication stage of the protocol will repeat the following until a sufficiently large raw-key has been distilled:
\begin{enumerate}
\item $B$ chooses, with probability $p_M$ operation ``$\MR$'' otherwise he chooses ``$\R$.''  Note that the terminology $\MR$ and $\R$ do not have any operational meaning in this protocol - we simply use them so that the reduction later from our SQKD protocol $\psqkd$ makes sense.  If he chooses $\R$, he prepares a $3n$ qubit state of the form:
\begin{equation}
\ket{\phi_R} = \sum_{b=0}^{2^n-1}\sqrt{p(b)}\ket{b,b}_{A_1A_2} \otimes \ket{0}_B
\end{equation}
where the right-most $B$ register contains $n$ qubits in the state $\ket{0}$.  Otherwise, if he chooses $\MR$, he prepares a $3n$ qubit state of the form:
\begin{equation}
\ket{\phi_{MR}} = \sum_{b=0}^{2^n-1}\sqrt{p(b)}\ket{b,b,b}_{A_1A_2B}
\end{equation}
Regardless of his choice, he sends the $A_1A_2$ register (consisting of $2n$ qubits) to $A$.

\item Same as step (3) of $\pe$.

\item Same as step (4) of $\pe$.
\end{enumerate}
\end{algorithm}

\begin{theorem}\label{thm:security1}
Let $(U_F, U_R)$ be a collective attack against $\pe$ and let $\rho_{ABE}$ be the resulting density operator describing a single iteration of $\pe$ in the event this attack is used.  Then, there exists an attack of the form $(\{p(b)\}_{b=0}^{2^n-1}, U)$ against $\ps$ such that, if $\sigma_{ABE}$ is the resulting density operator of a single iteration of $\ps$ in this case, it holds that $\sigma_{ABE} = \rho_{ABE}$.  In particular, there is no advantage to $E$ in either case and, furthermore, no party $A$, $B$, or $E$ can distinguish between the two scenarios.
\end{theorem}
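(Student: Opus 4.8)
The plan is to construct the one-way attack $(\{p(b)\}, U)$ explicitly from $(U_F,U_R)$ by tracking the global pure state through $\pe$ and through $\ps$ and forcing the two to agree branch-by-branch. First I would fix notation for the forward attack by writing $U_F\ket{a}_T\ket{0}_E = \sum_m \ket{m}_T\ket{e_{a,m}}_E$ with subnormalized ancillas $\ket{e_{a,m}}$, and for the reverse attack by writing $U_R\ket{m}_T\ket{e_{a,m}}_E = \sum_j \ket{j}_{A_2}\ket{h_{a,m,j}}_E$. Propagating $A$'s initial entangled state through the forward channel, $B$'s operation, and the reverse channel, the pre-measurement state of $\pe$ on registers $A_1A_2BE$ becomes, in the $\MR$ branch, $\frac{1}{\sqrt{N}}\sum_{a,m,j}\ket{a}_{A_1}\ket{j}_{A_2}\ket{m}_B\ket{h_{a,m,j}}_E$, and in the $\R$ branch the same expression but with $\ket{0}_B$ in place of $\ket{m}_B$ and with the $m$-sum absorbed into $\ket{g_{a,j}}_E = \sum_m\ket{h_{a,m,j}}_E$. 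The crucial structural observation, and the conceptual heart of the reduction, is that in the $\MR$ branch $B$'s private register records the intermediate, post-$U_F$ value $m$, not $A$'s effectively-prepared value $a$; this is precisely the source of noise between $A$'s and $B$'s raw keys.

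Next I would write the corresponding $\ps$ states. Adjoining $E$'s ancilla $\ket{0}_E$ and applying the single operator $U$ to $A_1A_2E$, the $\MR$ state is $\sum_b\sqrt{p(b)}\,U(\ket{b}_{A_1}\ket{b}_{A_2}\ket{0}_E)\ket{b}_B$ and the $\R$ state is $\sum_b\sqrt{p(b)}\,U(\ket{b}_{A_1}\ket{b}_{A_2}\ket{0}_E)\ket{0}_B$. Matching the $\MR$ states register-by-register in $B$ identifies $B$'s preparation index $b$ with $\pe$'s intermediate value $m$ and forces the definition $\sqrt{p(m)}\,U(\ket{m}_{A_1}\ket{m}_{A_2}\ket{0}_E) = \frac{1}{\sqrt{N}}\sum_{a,j}\ket{a}_{A_1}\ket{j}_{A_2}\ket{h_{a,m,j}}_E$. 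I would then check that the $\R$ branch holds automatically: summing this identity over $m$ and using $\ket{g_{a,j}}_E = \sum_m\ket{h_{a,m,j}}_E$ reproduces $\pe$'s $\R$ state exactly, so a single $U$ indeed covers both of $B$'s choices.

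It remains to verify that these equations define a legitimate attack, and this is where the real work (and the main obstacle) lies. I would adopt the normalization dictated by unitarity, $p(m) = \frac{1}{N}\sum_a\|e_{a,m}\|^2$, and confirm it is a genuine probability distribution: the identities $\sum_j\|h_{a,m,j}\|^2 = \|e_{a,m}\|^2$ and $\sum_m\|e_{a,m}\|^2 = 1$ follow from unitarity of $U_R$ and $U_F$ respectively, giving $\sum_m p(m)=1$. The harder check is that the prescription for $U$ extends to a genuine unitary: I would show the candidate images $\{U(\ket{m}_{A_1}\ket{m}_{A_2}\ket{0}_E)\}_m$ are orthonormal, which reduces to the computation $\sum_j\braket{h_{a,m,j}|h_{a,m',j}} = \delta_{m,m'}\braket{e_{a,m}|e_{a,m'}}$, again a direct consequence of $U_R$ being inner-product preserving. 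Orthonormality of the images shows $U$ is a partial isometry on $\mathrm{span}\{\ket{m}_{A_1}\ket{m}_{A_2}\ket{0}_E\}$ (with the convention that indices $m$ having $p(m)=0$ do not appear in $B$'s prepared states and may be assigned arbitrarily), so it extends to a unitary on all of $A_1A_2E$.

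Finally, since the pre-measurement states on $A_1A_2BE$ now agree in each branch and each protocol performs identical subsequent steps — $A$'s basis choice with probability $p_Z$ and her $\Z$/$\X$ measurement of $A_1,A_2$, $B$'s $\Z$-measurement, and the public announcement of choices, all under the same public parameters $p_M,p_Z$ — the classical-quantum operators obtained after these measurements coincide, giving $\sigma_{ABE}=\rho_{ABE}$. Because the registers $A$, $B$, and $E$ are reproduced jointly and exactly, $E$'s information is identical and no party can distinguish the two executions, which is the claim. I expect the bookkeeping of the $\MR$-branch $B$-register (the $m$-versus-$a$ distinction) and the orthonormality verification for $U$ to be the two steps demanding the most care.
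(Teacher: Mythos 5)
Your proposal is correct and follows essentially the same route as the paper: you choose the identical distribution $p(b)=\frac{1}{N}\sum_a\braket{e_{a,b}|e_{a,b}}$ and build a single isometry that maps $B$'s prepared states $\ket{b,b}$ onto the corresponding branches of the $\pe$ state, verifying orthonormality of the images and then matching the $\MR$ and $\R$ branches register-by-register. The only (cosmetic) difference is that the paper factors its operator as $U=(I_{A_1}\otimes U_R)\rw$ and checks agreement at the intermediate time $t^*$ before $U_R$ acts, whereas you absorb $U_R$ into $U$ and check agreement of the final pre-measurement states -- the two constructions coincide, and your explicit handling of the $p(b)=0$ case is a small point the paper leaves implicit.
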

\begin{proof}
Fix an attack $(U_F, U_R)$. Without loss of generality, we may write $U_F$'s action on basis states as:
\[
U_F\ket{a}\otimes\ket{\chi}_E = \sum_{b=0}^{N-1}\ket{b, e_{a,b}},
\]
where $N = 2^n$ and $\ket{e_{a,b}}$ are arbitrary states in $E$'s ancilla (we assume, without loss of generality in the collective attack case, that $E$'s ancilla starts in some pure state $\ket{\chi}_E$).  Unitarity, of course, imposes some restrictions on these states.  In particular, for every $a$ it holds that:
\begin{equation}\label{eq:thm1-norm}
\sum_{b=0}^{N-1}\bk{e_{a,b}} = 1.
\end{equation}

Given this attack, we construct $(\{p(b)\}, U)$, an attack against $\ps$, that satisfies the theorem statement.  To do so, we follow a technique first introduced in our conference paper \cite{krawec-conf} but generalized here for higher dimensions.  First, we set the values $p(b)$ to:
\begin{equation}
p(b) = \frac{1}{N}\sum_{a=0}^{N-1}\bk{e_{a,b}}
\end{equation}
Clearly $p(b) \ge 0$ for all $b$.  Furthermore, from Equation \ref{eq:thm1-norm}, it follows that:
\[
\sum_bp(b) = \frac{1}{N}\sum_b\sum_a\bk{e_{a,b}} = \frac{1}{N}\sum_a\sum_b\bk{e_{a,b}} = 1,
\]
thus this is a valid probability distribution, and so a valid attack setting.

Now, consider the following operator $\rw$ which we call the ``rewind'' operator as, in a way, it ``rewinds'' the channel so that a state prepared by $B$ in the one-way case (i.e., protocol $\ps$) appears \emph{to all three parties} as if it had been prepared by $A$ in the two-way case (i.e., $\pe$).  In particular, it will ``setup'' the $A_1$ register and $E$'s quantum memory as if this had been performed in the two-way $\pe$ case.  The only thing that cannot be ``rewound'' is $B$'s measurement distribution, thus the need for $E$ to set this separately through the $p(b)$ values.  This operator acts on basis states $\ket{b,b}$ (sent by $B$ in the one-way protocol $\ps$) as follows:
\begin{equation}
\rw\ket{b,b}_{A_1A_2} = \frac{\sum_{a=0}^{N-1}\ket{a,b,e_{a,b}}}{\sqrt{N\cdot p(b)}}.
\end{equation}
It is not difficult to see that $\rw$ is an isometry.  Indeed, given $\ket{b,b}$ and $\ket{b',b'}$ for $b \ne b'$, we have:
\begin{align*}
0 = \braket{b,b|b',b'} &= \frac{1}{N\sqrt{p(b)p(b')}}\sum_{a,a'}\braket{a,b,e_{a,b}|a',b',e_{a',b'}} = 0.
\end{align*}
Furthermore, we have:
\begin{align*}
1 = \braket{b,b|b,b} &= \frac{1}{N\cdot p(b)}\sum_{a,a'}\braket{a,b,e_{a,b}|a',b,e_{a',b}}\\
&=\frac{1}{N\cdot p(b)}\sum_{a}\bk{e_{a,b}} = 1.
\end{align*}
Thus, $\rw$ is an isometry and may be extended, using standard techniques, to a unitary operator implying it is an operation that $E$ may do within the laws of quantum physics.  We claim that $U = (I_{A_1}\otimes U_R)\rw$ is the desired attack operator satisfying the theorem statement.

\begin{figure}
  \centering
  \includegraphics[width=250pt]{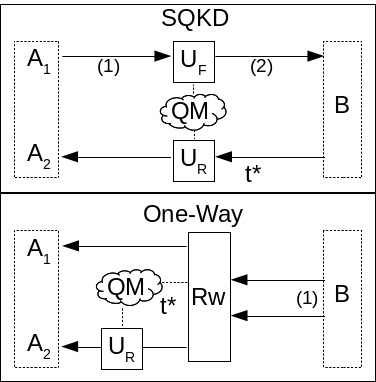}
  \caption{Showing the reduction from the semi-quantum protocol ($\psqkd$ and $\pe$, top) to the fully-quantum one-way protocol ($\ps$ bottom).  For the SQKD protocol, $A$ prepares qubits at time (1), Eve attacks, and then $B$ performs an operation $\MR$ or $\R$.  Time $t^*$ is after $B$'s operation.  On the other hand, for the fully-quantum protocol, $B$ prepares two qubits and sends both to $A$.  $E$ attacks with a specially designed $\rw$ operator resulting in a state at time $t^*$.  We claim a suitable $\rw$ operator can be constructed so that the density operators in both cases at time $t^*$ are identical.  Later, when proving general security of the one-way protocol, we do not require any special attack; clearly security of the SQKD protocol, then, would follow.  QM stands for $E$'s quantum memory.}\label{fig:proof}
\end{figure}

Refer to Figure \ref{fig:proof}.  Consider the case when $B$ chooses $\MR$.  At time $t^*$ (after $E$ attacks with $U_F$ and $B$'s $\MR$ operation, but \emph{before} $E$ attacks a second time with $U_R$), the joint state held by $A$, $B$, and $E$ using protocol $\pe$ is found to be:
\begin{equation}
\ket{\psi^{ent}_{MR}} = \frac{1}{\sqrt{N}}\sum_a\ket{a}_A\sum_b\ket{b, e_{a,b}, b}_{TEB}.
\end{equation}
Now, again, referring to Figure \ref{fig:proof}, consider the same case (namely, $B$ choosing $\MR$) but with the $\ps$ protocol.  In this event, $B$ prepares the state $\sum_b\sqrt{p(b)}\ket{b,b,b}_{A_1A_2B}$ and $E$ attacks with $\rw$.  The joint system then, at time $t^*$ is:
\begin{equation}
\ket{\phi_{MR}} = \sum_b\sqrt{p(b)}\left(\frac{\sum_a\ket{a,b,e_{a,b}}}{\sqrt{N\cdot p(b)}}\right)\otimes\ket{b}_B = \frac{1}{\sqrt{N}}\sum_a\ket{a}_{A_1}\sum_b\ket{b,e_{a,b},b}_{A_2EB} = \ket{\psi^{ent}_{MR}}
\end{equation}
Thus, after applying $\rw$, the state of the joint system for the case of $\ps$ is identical to that of $\pe$.  Of course, after applying $U_R$ (which happens in both scenarios since we constructed $U = (I_{A_1}\otimes U_R)\rw$), the systems will remain the same.  \emph{Thus, any measurement outcomes or entropy computations will be identical in both scenarios.}  It is trivial to show the same holds true in the $\R$ case for both protocols (in that case, the additional $\ket{b}$ term is no longer there but the algebra remains the same otherwise).  Thus, if one were to write out a density operator description of both protocols, tracing their evolution, they would be identical as the underlying systems are identical in all cases.  Note that the only thing $E$ could not ``rewind'' with $\rw$ is the probability distribution of $B$'s measurements (since he is now preparing).  Thus it is required that $E$ gets to choose the distribution $p(b)$ so that the probability distribution in $\ps$ matches that observed in $\pe$.  This completes the proof.
\end{proof}

Theorem \ref{thm:security1} implies that it is sufficient to prove security of the one-way protocol $\ps$.  Since any attack against $\pe$ can also be transformed into an attack against $\ps$, if we analyze a general attack against the latter, this automatically gives security against the former.  Indeed, there may be more attack strategies for $E$ against $\ps$ as $E$ has access to both $n$ qubit registers simultaneously; despite this, it is easier to analyze as it is a one-way protocol.  Furthermore, note that no party can distinguish between the two scenarios and, as a consequence, observed channel noise in the ``real'' SQKD protocol $\pe$ translate directly to observed statistics in the one-way protocol $\ps$.  Our goal is to prove security of $\pe$ (which proves security of $\psqkd$) and, given observed noise statistics there, if we prove security of $\ps$ given those same statistics, the key-rate can only be better in $\pe$ (since $\ps$ has potentially more attack strategies as mentioned).

\subsection{Proof of Security for $\ps$}
We now prove security of $\ps$.  In the following, we define $N = 2^n$ where $n$ is the user-defined number of qubits sent per iteration of our protocol.  Our proof of security is in three steps.  First, we compute the conditional entropy $H(A|E)$ in the case where $B$ chooses $\R$.  This, of course, is useless for key distillation as $B$ is completely independent of the state in this case, but it will be used later to argue about the entropy in the actual key-distillation state (i.e., when $B$ chooses $\MR$).  Second, we argue that $E$'s optimal attack must take on a particular form if $A$ and $B$ use the $\Z$ or $\X$ basis.  Third, and finally, we use these results, along with Winter's continuity bound on conditional entropy \cite{winter-cont}, to compute the entropy of $A$'s register conditioned on $E$'s quantum memory in the actual key-distillation state when $B$ chooses $\MR$ giving us the desired key-rate.

First, we need a channel scenario for the real $\pe$ protocol (which translates, as discussed, to observations for $\ps$).  Keeping in line with other high-dimensional QKD analyses \cite{QKD-high-dim-dep,QKD-QW}, we consider a symmetric attack modeled as a depolarization channel (\emph{which may even be enforced by users}):
\begin{equation}
\mathcal{E}_Q(\sigma) = \left(1-\frac{N}{N-1}Q\right)\sigma + \frac{Q}{N-1}I.
\end{equation}
where $\sigma$ is any $N$ dimensional quantum state.

We will assume the noise in the forward channel and reverse channel are the same and parameterized by $Q$ (though our analysis follows even if they are different, though the algebra complexity increases).  In the ``reflect'' case, we will use a depolarization parameter $Q_F$ - this captures the practical case that, for certain fiber channels, reflecting a quantum state back can ``undue'' some noise (but in the $\MR$ case this cannot happen as the ``measurement'' breaks any entanglement in the channel) \cite{lm05-ent,lucamarini2014quantum}.

Let $p(x|y)$ be the probability that a party observes $x$ given the sender sent $y$ (in the $\Z$ or $\X$ basis) in either the forward or reverse channel.  From this model, we have:
\begin{equation}\label{eq:pxy}
p(x|y) = \left\{\begin{array}{cl}
1-Q & \text{ if x = y}\\
\frac{Q}{N-1} & \text{ otherwise}
\end{array}\right.
\end{equation}
In $\pe$, the probability that $A_1^Z$ (i.e., after measuring) is $a$, for any particular $a$, is simply $p(a) = 1/N$.  Furthermore the probability that $B$ measures $b$ is $\sum_ap(b|a)p(a) = \frac{1}{N}( 1-Q + (N-1)\frac{Q}{N-1}) = 1/N$.  Thus, we set $p(b) = 1/N$ when analyzing $\ps$ ($E$'s choice here must conform to the observed statistics in the ``real'' protocol $\pe$).

Let $p(a,b,c)$ be the probability that, in the case of $\MR$, if all parties measure in the $\Z$ basis, $A_1$ measures $a$, $B$ measures $b$, and $A_2$ measures $c$ (recall $A_1$ is $A$'s first $n$-qubit register and $A_2$ is her second register).  Then, since this is a classical probability distribution, by the chain rule it holds that:
\begin{equation}
p(a,b,c) = p(c|b,a)\cdot p(b|a)\cdot p(a).
\end{equation}
We will assume that \emph{in the $\MR$ case} of $\pe$, the two channels act independently and, so, $p(c|b,a) = p(c|b)$.  That is, $A$'s measurement in the return channel, depends only on what $B$ actually sends.  This is a very realistic noise scenario \emph{and can even be enforced by the users} - $A$ and $B$ will simply abort if they do not observe this (natural) behavior.  Of course, as discussed, we do not assume the two channels act independently if $B$ chooses to $\R$ (such an assumption would not be natural nor could it be enforced and so we do not make it here).  Under these assumptions, it is not difficult to see that:
\begin{equation}\label{eq:pabc}
p(a,b,c) = \frac{1}{N} \times \left\{\begin{array}{cl}
\beta^2 & \text{ if $c = b$ and $b = a$}\\
\alpha\beta & \text{ if $c \ne b$ and $b = a$}\\
\alpha\beta & \text{ if $c = b$ and $b \ne a$}\\
\alpha^2 & \text{ if $c \ne b$ and $b \ne a$}
\end{array}\right.
\end{equation}
where:
\begin{align}
\alpha = \frac{Q}{N-1} && \beta = 1-Q.
\end{align}

Given these observed channel statistics in $\pe$ we now turn to $\ps$ using this same distribution on measurement events.  Ultimately, our goal is to compute a lower-bound on the key-rate: $H(A_1^Z|E)_\mu - H(A_1^Z|B_1^Z)_\mu$, where $\mu_{A_1B_1E}$ is the density operator describing an iteration of the protocol in the $\MR$ case.

$ $\newline\newline\textbf{First Step - Entropy in the Reflect Case:}
Let $\rho_{A_1A_2BE}$ be the density operator describing the state of the system (\emph{before measurements are made by any party}) if $B$ chooses $\R$.  Similarly, let $\mu_{A_1A_2BE}$ be the density operator in the case $B$ chooses $\MR$.  Since key-bits are only distilled in this $\MR$ case, to compute the key-rate of the protocol we will require a bound on the von Neumann entropy $H(A_1^Z|E)_\mu$.  However, we will actually, first, bound $H(A_1^Z|E)_\rho$ and later argue that, due to continuity of entropy \cite{winter-cont}, the difference in entropy between the two systems, $\rho$ and $\mu$, cannot be ``too large.''

Consider the state $\rho_{A_1A_2BE}$.  In this $\R$ case, $B$'s system is completely independent of all other systems; thus $\rho_{A_1A_2BE} \equiv \rho_{A_1A_2E}\otimes\kb{0}_B$ and so the $B$ portion does not factor into any entropy equations and may be ignored.  Using the entropic uncertainty relation proven in \cite{QKD-uncertainty} (see Equation \ref{eq:uncertainty}), we know:
\[
H(A_1^Z|E)_\rho \ge n - H(A_1^F|A_2)_\rho \ge n - H(A_1^F|A_2^F)_\rho,
\]
where the second inequality follows from the fact that measurements cannot decrease uncertainty.  If we could distill a key from $\rho$, we would be finished - in fact, the above would be the case when $A_1$ is attempting to distill a key with herself, ``$A_2$'' which, of course, is meaningless from a practical standpoint.  However, as we now show, knowing the entropy in $\rho$ allows us to bound the entropy in $\mu$ (which is what we actually want in order to compute the key-rate of our protocol).

Consider $E$'s attack operator $U$ against $\ps$. Without loss of generality, we may write $U$'s action on basis states of the form $\ket{b,b}$ as follows:
\begin{equation}
U\ket{b,b}\otimes\ket{\chi}_E = \sum_{a=0}^{N-1}\sum_{c=0}^{N-1}\ket{a,c, e_{a,b,c}},
\end{equation}
where the $\ket{e_{a,b,c}}$ are arbitrary states in $E$'s ancilla (again, we assume without loss of generality that $E$'s ancilla is cleared to some initial pure state $\ket{\chi}_E$).  Note that we are not assuming a particular structure to this attack (e.g., we do not assume it consists of the $\rw$ operator used in the proof of Theorem \ref{thm:security1} - instead, it may be arbitrary and if we prove security here, we will gain security of $\pe$ since it will cover any attack against that protocol).

Consider $\rho_{A_1^ZE}$, i.e., the state of the system after $A$ measures the $A_1$ register in the $\Z$ basis and tracing out $A_2$.  Tracing the evolution of the state in this case, and recalling that $p(b) = 1/N$ due to our (enforceable) symmetry assumption, we find:
\begin{equation}\label{eq:thm2-rho}
\rho_{A_1^ZE} = \frac{1}{N}\sum_{a=0}^{N-1}\kb{a}\otimes \left(\sum_{c=0}^{N-1}P\left[\sum_{b=0}^{N-1}\ket{e_{a,b,c}}\right]\right),
\end{equation}
where $P(z) = zz^*$.

On the other hand, tracing the evolution of the protocol in the case $B$ chooses $\MR$, gives us the following operator:
\begin{equation}\label{eq:thm2-mu}
\mu_{A_1^ZE} = \frac{1}{N}\sum_{a=0}^{N-1}\kb{a}\otimes \left(\sum_{c=0}^{N-1}\sum_{b=0}^{N-1}\kb{e_{a,b,c}}\right).
\end{equation}

Our goal in the remainder of the security proof is to bound the difference between $H(A_1^Z|E)_\rho$ and $H(A_1^Z|E)_\mu$.  To do so, we will use Winter's continuity bound \cite{winter-cont} and in particular, the case derived for classical-quantum states.  This bound states that (rewriting in terms of our notation of course):
\begin{equation}\label{eq:cont-bound}
|H(A_1^Z|E)_\rho - H(A_1^Z|E)_\mu| \le \Delta\log A_1^Z + (1+\Delta)H\left(\frac{\Delta}{1+\Delta}\right),
\end{equation}
where:
\[
\Delta = \frac{1}{2}\tdl \rho_{A_1^ZE} - \mu_{A_1^ZE}\tdr.
\]
Of course $\log A_1^Z = n$.  Thus, our goal is to determine an upper-bound on the trace distance $\Delta$.  Note that an upper-bound will only increase the distance between the two entropies causing the key-rate to drop.  Thus by finding an upper-bound, we determine a worst-case key-rate and the actual key-rate can only be higher.

By elementary properties of trace distance, along with the triangle inequality, we have:
\begin{equation}\label{eq:delta-ac}
\Delta \le \frac{1}{2N}\sum_{a,c=0}^{N-1}\underbrace{\tdl P\left(\sum_{b=0}^{N-1}\ket{e_{a,b,c}}\right) - \sum_{b=0}^{N-1}\kb{e_{a,b,c}} \tdr}_{\Delta_{a,c}} = \frac{1}{2N}\sum_{a,c}\Delta_{a,c}.
\end{equation}

$ $\newline\newline\textbf{Second Step - Structure of $E$'s Attack Operator:}
Before computing $\Delta$, we argue now that $E$'s optimal attack operator has a particular structure to it.
As discussed earlier, let $p(a,b,c)$ denote the probability that measuring $A_1$ results in $a$; measuring $B$ results in $b$; and measuring $A_2$ results in $c$ (where these measurements are performed in the $\Z$ basis in the $\MR$ case; thus $a,b,c \in \{0,1,\cdots,N-1\}$).  It is not difficult to see that $p(a,b,c) = \bk{e_{a,b,c}}/N$.  Indeed, note that the state $\mu_{A_1^ZA_2^ZBE}$ (i.e., the case where $B$ chooses $\MR$, but before tracing out $A_2^Z$ and $B$ which we did for Equation \ref{eq:thm2-mu}) is found to be:
\[
\mu_{A^Z_1A^Z_2BE} = \frac{1}{N}\sum_a\kb{a}_{A_1}\otimes\sum_c\kb{c}_{A_2}\otimes\sum_b\kb{b}_B\otimes\kb{e_{a,b,c}},
\]
from which it is clear that $p(a,b,c) = \bk{e_{a,b,c}}/N$.  Since $N$ is known and since $p(a,b,c)$ is a value that can be observed by the parties running the protocol, this implies $\bk{e_{a,b,c}}$ is also an observable quantity.

We now claim that it is to $E$'s advantage to choose her attack such that for any \emph{fixed} $a, c$, it holds that:
\begin{equation}\label{eq:state-orth}
\braket{e_{a,b,c}|e_{a,b',c}} = \left\{\begin{array}{cl}
N\cdot p(a,b,c) & \text{ if } b = b'\\
0&\text{ if } b \ne b'\end{array}\right.
\end{equation}
Indeed, orthogonal ancilla states cannot increase her uncertainty, thus the only reason to make these states non orthogonal would be if, by doing so, she could make some other, potentially ``more important'' vectors closer to orthogonal (e.g., the non-error cases such as $\braket{e_{0,0,0}|e_{1,1,1}}$) while still falling within the observed noise statistics.  But the inner-product $\braket{e_{a,b,c}|e_{a,b',c}}$ does not contribute to the observed noise in any way, assuming basis $\X$ is used, and thus she might as well set them to be orthogonal potentially decreasing her overall uncertainty (but certainly not increasing it).

Clearly the inner-product $\braket{e_{a,b,c}|e_{a,b',c}}$ does not contribute to the $\Z$ basis noise when $b \ne b'$.  We thus consider the $\X$ basis noise.  Consider the case when $B$ chooses $\R$ in which case the state arriving to $A$, before measuring, is:
\begin{equation}\label{eq:thm:ref-state}
\frac{1}{2^{n/2}}\sum_{a,c}\ket{a,c}\ket{g_{a,c}},
\end{equation}
where $\ket{g_{a,c}} = \sum_b\ket{e_{a,b,c}}$.  Since the above is normalized, it holds that:
\begin{equation}\label{eq:gab-norm}
\frac{1}{2^n}\sum_{a,c}\bk{g_{a,c}} = 1.
\end{equation}
Now, changing basis, we may write $\ket{j} = \sum_x\beta_{x,j}\ket{F_x}$, where $\beta_{x,j} = \braket{F_x|j}$.  Clearly, due to our choice of basis $\X$, it holds that $|\beta_{x,j}|^2 = 1/2^n$.  Taking Equation \ref{eq:thm:ref-state} and changing basis in both the $A_1$ and $A_2$ registers yields:
\[
\frac{1}{2^{n/2}}\sum_{x,y}\ket{F_x, F_y}\left(\sum_{a,c}\beta_{x,a}\beta_{y,c}\ket{g_{a,c}}\right).
\]
Thus, the probability that $A_1$ measures $F_x$ and $A_2$ measures $F_y$, for any $x,y$ is:
\begin{align*}
& \frac{1}{2^n}\left|\sum_{a,c}\beta_{x,a}\beta_{y,c}\ket{g_{a,c}}\right|^2\\
=& \frac{1}{2^n}\sum_{a,c}\frac{1}{2^{2n}}\bk{g_{a,c}} + \sum_{(a,c)\ne(a',c')}\beta_{x,a}\beta_{x,a'}^*\beta_{y,c}\beta_{y,c'}^*\braket{g_{a,c}|g_{a',c'}}\\
=&\frac{1}{2^{2n}} + \sum_{(a,c)\ne(a',c')}\beta_{x,a}\beta_{x,a'}^*\beta_{y,c}\beta_{y,c'}^*\braket{g_{a,c}|g_{a',c'}},
\end{align*}
where for the third equality, we use Equation \ref{eq:gab-norm}.
Note that $\braket{g_{a,c}|g_{a',c'}}$, for $(a,c) \ne (a',c')$ has no terms of the form $\braket{e_{a,b,c}|e_{a,b',c}}$ (since either $a'$ or $c'$ will not equal $a$ or $c$).  Thus the $\braket{e_{a,b,c}|e_{a,b',c}}$ inner product cannot affect any observed noise statistic.  Therefore there is no advantage to $E$ in making it non-orthogonal as it cannot benefit her by ``hiding'' other states in the noise of the channel (e.g., she cannot use $\braket{e_{a,b,c}|e_{a,b',c}}$ to increase the orthogonality of other vectors to her advantage while still keeping within the observed noise statistics).  We may therefore assume the attack operator $U$ is such that Equation \ref{eq:state-orth} applies.  Note that this proof would not hold if $|\beta_{i,j}|^2 \ne 1/2^n$ for all $i,j$.

Thus, for any fixed $a$ and $c$, we may define an orthonormal basis $\{\ket{\nu_b^{(a,c)}}\}_{b=0}^{N-1}$ and write:
\[
\ket{e_{a,b,c}} = \sqrt{N\cdot p(a,b,c)}\ket{\nu_b^{(a,c)}}.
\]
Note that we do not assume any relation between these vectors for differing $a$ and $c$.  I.e., we do not make any assumptions on the value $\braket{\nu_b^{(a,c)}|\nu_{b'}^{(a',c')}}$ when $a\ne a'$ or $c \ne c'$.

$ $\newline\newline\textbf{Third Step - Continuity Bound Analysis:}
From the above analysis on the structure of $E$'s optimal attack operator, we may write $\Delta_{a,c}$, defined in Equation \ref{eq:delta-ac}, as:
\begin{align}
\Delta_{a,c} &= \tdl P\left(\sum_{b=0}^{N-1}\sqrt{N\cdot p(a,b,c)}\ket{\nu_b^{(a,c)}}\right) - \sum_{b=0}^{N-1}N\cdot p(a,b,c)\kb{\nu_b^{(a,c)}} \tdr\notag\\
&=N \tdl P\left(\sum_{b=0}^{N-1}\sqrt{p(a,b,c)}\ket{b}\right) - \sum_{b=0}^{N-1}p(a,b,c)\kb{b} \tdr\label{eq:thm2-delta-ac2}
\end{align}
where the last equality follows from the fact that trace distance is invariant to changes in basis and, again, we use $P(z) = zz^*$.


Recall our description of the channel, and in particular the value of $p(a,b,c)$ given in Equation \ref{eq:pabc}.  Note that, if $Q = 0$, then it is easy to see that $\Delta_{a,c} = 0$ for all $a,c$ and so we are done.  Thus, in the following, we will consider $0 < Q < 1/2$.  Due to the symmetry in a depolarization channel as clearly seen in the expression for $p(a,b,c)$ in Equation \ref{eq:pabc} (\emph{again, this may even be enforced by users)}, there are two cases to consider, first when $c = a$ and second when $c \ne a$.  For the first, we have:
\begin{align}
\Delta_{a,a} &= N\tdl P\left(\sum_{b=0}^{N-1}\sqrt{p(a,b,a)}\ket{b}\right) - \sum_{b=0}^{N-1}p(a,b,a)\kb{b}\tdr\notag\\
&=N\tdl\sum_{b\ne b'}\sqrt{p(a,b,a)p(a,b',a)}\ket{b}\bra{b'}\tdr.
\end{align}

Let $X$ be the operator $X = N\sum_{b\ne b'}\sqrt{p(a,b,a)\cdot p(a,b'a)}\ket{b}\bra{b'}$.  Thus $\Delta_{a,a} = ||X||$.  Since it is Hermitian, we may decompose $X$ as:
\begin{equation}\label{eq:A-expansion}
X = \sum_{j=0}^{N-1}\lambda_j\kb{v_j},
\end{equation}
where $\{\ket{v_j}\}$ are orthogonal eigenvectors and $\lambda_j$ are (real) eigenvalues; thus $X\ket{v_j} = \lambda_j\ket{v_j}$ for all $j=0,\cdots, N-1$ and, of course, $||X|| = \sum_j|\lambda_j|$.  Consider a particular eigenvector $\ket{v} = \ket{v_j} = \sum_ix_i\ket{i}$.  Then:
\begin{equation}
X\ket{v} = N\sum_{b=0}^{N-1}\underbrace{\left(\sum_{\substack{i=0\\i\ne b}}^{N-1}x_i\sqrt{p(a,b,a)\cdot p(a,i,a)}\right)}_{y_b}\ket{b} = N\sum_by_b\ket{b}.
\end{equation}
Thus, for $\lambda = \lambda_j$ to be the corresponding eigenvalue, it must hold that $Ny_b = \lambda x_b$ for all $b=0,\cdots,N-1$.  Note that, when $b=a$, it holds that:
\begin{align}
&Ny_a = \lambda x_a \iff N\sum_{\substack{i=0\\i\ne a}}^{N-1}x_i\sqrt{p(a,a,a)\cdot p(a,i,a)}=\lambda x_a\iff\alpha\beta\sum_{i\ne a}x_i = \lambda x_a.\label{eq:thm-main-A-mat1}
\end{align}
When $b \ne a$, then $Ny_b$ simplifies to:
\begin{align}
N\sum_{i\ne b}x_i\sqrt{p(a,b,a)\cdot p(a,i,a)} = N\left(x_a\sqrt{p(a,b,a)\cdot p(a,a,a)} + \sum_{\substack{i\ne b\\i\ne a}}x_i\sqrt{p(a,b,a)\cdot p(a,i,a)}\right)\notag
\end{align}
and thus it must hold that:
\begin{equation}\label{eq:thm-main-A-mat2}
\alpha\beta x_a + \alpha^2\sum_{\substack{i\ne b\\i \ne a}}x_i = \lambda x_b.
\end{equation}

Now, assume that there exists a $k\ne k'$ such that $x_k \ne x_{k'}$ and both $k$ and $k'$ are not equal to $a$ (we will handle the case when this is not true afterwards).  From Equation \ref{eq:thm-main-A-mat2} we have, using the case when $b = k$ and $b = k'$ respectively:
\begin{align*}
\alpha\beta x_a + \alpha^2\sum_{\substack{i\ne k\\i \ne a}}x_i &= \lambda x_k\\
\alpha\beta x_a + \alpha^2\sum_{\substack{i\ne k'\\i \ne a}}x_i &= \lambda x_{k'}.
\end{align*}
Subtracting these two expressions yields:
\begin{align}
&\alpha^2(x_{k'} - x_k) = \lambda (x_k - x_{k'})\notag\\
\Rightarrow &\lambda  = -\alpha^2.
\end{align}
We next claim the geometric multiplicity of this eigenvalue is $N-2$ and, thus, this eigenvalue appears $N-2$ times in Equation \ref{eq:A-expansion}.  Consider the operator $X-\lambda I$.  By choosing a suitable basis we may write this in matrix form as:
\begin{equation}
X-\lambda I = \left(
\begin{array}{ccccc}
-\lambda & \alpha\beta & \alpha\beta & \cdots & \alpha\beta\\
\alpha\beta & -\lambda & \alpha^2  & \cdots & \alpha^2\\
\alpha\beta & \alpha^2 & -\lambda & \cdots & \alpha^2\\
\vdots & \vdots & \vdots&\ddots&\vdots\\
\alpha\beta & \alpha^2 & \alpha^2 & \cdots & -\lambda
\end{array}\right)
\end{equation}
Substituting $\lambda = -\alpha^2$ it is clear that the rank of $X - (-\alpha^2)I$ is at most two.  Thus the geometric multiplicity is at least $N-2$ (and indeed is exactly $N-2$ except when $Q=0$ or $Q = 1-1/N$; but the first case is considered separately as mentioned, and the second case implies $Q > 1/2$ which is much larger than our evaluations later and so not considered).  Therefore, exactly $N-2$ of the eigenvalues of $X$ are $-\alpha^2 = -\frac{Q^2}{(N-1)^2}$.

The remaining two eigenvalues are found when there does not exist $k \ne k'$ (where $k \ne a$ and $k' \ne a$) such that $x_k \ne x_{k'}$.  In this case we have $x_k = x_{k'} = x$ for all $k, k'$ not equal to $a$.  Using Equation \ref{eq:thm-main-A-mat1} we find:
\[
\alpha\beta(N-1)x = \lambda x_a \Longrightarrow x_a = \frac{\alpha\beta(N-1)x}{\lambda}.
\]
Note that the above equation forces $x \ne 0$ as, otherwise, $x_a$ is also $0$ and so $\ket{v}$ would be the zero vector and not an eigenvector of Hermitian operator $X$.  Substituting this into Equation \ref{eq:thm-main-A-mat2} (for any $b \ne a$) we find:
\begin{align*}
&\alpha\beta\left(\frac{\alpha\beta(N-1)x}{\lambda}\right) + \alpha^2(N-2)x = \lambda x\\
\iff & \lambda^2 - \alpha^2(N-2)\lambda - \alpha^2\beta^2(N-1) = 0,
\end{align*}
thus leading us to the two remaining eigenvalues, which we denote $\lambda^X_\pm$:
\[
\lambda^X_\pm = \frac{1}{2}\left(\alpha^2(N-2)\pm \alpha\sqrt{\alpha^2(N-2)^2 + 4\beta^2(N-1)}\right).
\]
Since there was no dependence on $a$ in the above analysis, this leads us to conclude that:
\begin{equation}
\Delta_{a,a} = (N-2)\frac{Q^2}{(N-1)^2} + |\lambda^X_+| + |\lambda^X_-|.
\end{equation}

We next consider the case when $c \ne a$ and compute $\Delta_{a,c}$.  Following the same logic as before, fix a particular $c \ne a$ and consider the operator $Y = N\sum_{b\ne b'}\sqrt{p(a,b,c)\cdot p(a,b',c)}\ket{b}\bra{b'}$ (and so $\Delta_{a,c} = ||Y||$).  Let $\ket{v} = \sum_ix_i\ket{i}$ be an eigenvector of $Y$ such that $Y\ket{v} = \lambda\ket{v}$.  Then:
\[
Y\ket{v} = N\sum_{b=0}^{N-1}\underbrace{\left(\sum_{\substack{i=0\\i \ne b}}^{N-1}x_i\sqrt{p(a,b,c)\cdot p(a,i,c)}\right)}_{z_b}\ket{b} = N\sum_bz_b\ket{b}.
\]
To satisfy the equation $Y\ket{v} = \lambda\ket{v}$ we require $Nz_b = \lambda x_b$ for all $b=0,\cdots, N-1$.  There are three cases of $b$ to consider here: $b=a$, $b=c$ and $b\ne a,c$.  For each of these cases we find:
\begin{align}
b = a: Nz_a = \lambda x_a \iff && \alpha\beta x_c + \sum_{\substack{i\ne c\\ i \ne a}}x_i\sqrt{\alpha^3\beta} = \lambda x_a\label{eq:thm-main-X1}\\
b = c: Nz_c = \lambda x_c \iff && \alpha\beta x_a + \sum_{\substack{i \ne c \\ i \ne a}}x_i\sqrt{\alpha^3\beta} = \lambda x_c\label{eq:thm-main-X2}\\
b\ne a,c: Nz_b = \lambda x_b \iff && \sqrt{\alpha^3\beta}x_a + \sqrt{\alpha^3\beta}x_c + \sum_{\substack{i\ne a\\i \ne b\\ i \ne c}}\alpha^2x_i = \lambda x_b.\label{eq:thm-main-X3}
\end{align}

As with the previous operator $X$, we break this up into several cases depending on the eigenvector $\ket{v}$.  For the first case, assume there exists $k \ne k'$ with $k \ne a,c$ and $k'\ne a,c$ such that $x_k \ne x_{k'}$.  Then, using Equation \ref{eq:thm-main-X3}, for $b = k$ and $b=k'$ and subtracting the resulting expressions yields:
\begin{equation}
\alpha^2(x_{k'} - x_k) = \lambda(x_k - x_{k'}) \Longrightarrow \lambda = -\alpha^2
\end{equation}
We claim this eigenvalue has geometric multiplicity $N-3$.  Consider the operator $Y - \lambda I$ and, as before, by considering a suitable basis, we may write this in matrix form as:
\begin{equation}
Y-\lambda I = \left(
\begin{array}{cccccc}
-\lambda & \alpha\beta & \sqrt{\alpha^3\beta} & \sqrt{\alpha^3\beta} & \cdots & \sqrt{\alpha^3\beta}\\
\alpha\beta & -\lambda & \sqrt{\alpha^3\beta} & \sqrt{\alpha^3\beta} & \cdots & \sqrt{\alpha^3\beta}\\
\sqrt{\alpha^3\beta} & \sqrt{\alpha^3\beta} & -\lambda & \alpha^2 & \cdots & \alpha^2\\
\sqrt{\alpha^3\beta} & \sqrt{\alpha^3\beta} & \alpha^2 & -\lambda & \cdots & \alpha^2\\
\vdots & \vdots & \vdots & \cdots & \ddots & \vdots\\
\sqrt{\alpha^3\beta} & \sqrt{\alpha^3\beta} & \alpha^2 & \alpha^2 & \cdots & -\lambda
\end{array}\right)
\end{equation}
From this, it is evident that the rank of $Y - (-\alpha^2)I$ is three and so the geometric multiplicity of the eigenvalue $-\alpha^2$ is $N-3$ (again, assuming $Q \ne 0$ and $Q \ne 1-1/N$ which holds since $0 < Q < 1/2$.

Thus, there are 3 more eigenvalues.  Next, consider the case if $x_a \ne x_c$.  In this case, subtracting Equation \ref{eq:thm-main-X1} and \ref{eq:thm-main-X2} yields:
\begin{equation}
\alpha\beta(x_c-x_a) = \lambda(x_a-x_c) \Longrightarrow \lambda = -\alpha\beta
\end{equation}

Finally, consider the case where $x_a = x_c = x_1$ and $x_k = x_{k'} = x_2$ for every $k\ne k'$ and $k,k' \ne a,c$.  In this case, Equation \ref{eq:thm-main-X3} simplifies to:
\begin{equation}\label{eq:thm-main-B3-2}
2\sqrt{\alpha^3\beta}x_1 + \alpha^2(N-3)x_2 = \lambda x_2
\end{equation}
Note that this implies $x_2 \ne 0$ as, otherwise, $x_1 = 0$ and so $\ket{v}$ is the zero vector and not an eigenvector.

Equation \ref{eq:thm-main-X1} yields:
\[
\alpha\beta x_1 + (N-2)\sqrt{\alpha^3\beta}x_2 = \lambda x_1
\]
Note that the above equation also implies that $\lambda \ne \alpha\beta$ since if it were, we would have $x_2 = 0$ which, as already discussed, is not true.  Thus we may solve:
\[
x_1 = \frac{(N-2)x_2\sqrt{\alpha^3\beta}}{\lambda-\alpha\beta}.
\]
Substituting this into Equation \ref{eq:thm-main-B3-2} yields:
\begin{align}
&\frac{2x_2\alpha^3\beta(N-2)}{\lambda-\alpha\beta} + \alpha^2(N-3)x_2 = \lambda x_2\notag\\
\iff&2\alpha^3\beta(N-2) + \alpha^2(N-3)(\lambda-\alpha\beta) = \lambda(\lambda-\alpha\beta).
\end{align}
Solving the above quadratic for $\lambda$ gives us the two remaining eigenvalues which we denote $\lambda^Y_\pm$.  After some algebra, these eigenvalues are found to be:
\begin{equation}
\lambda^Y_\pm = \frac{1}{2}\left[\alpha\beta + \alpha^2(N-3) \pm \alpha\sqrt{\left(\beta+\alpha[N-3]\right)^2 + 4\alpha\beta(N-1)}\right]
\end{equation}
Since the above arguments were for arbitrary $a \ne c$, this gives us the following:
\begin{align}
\Delta_{a,c} &= (N-3)\alpha^2 + \alpha\beta + |\lambda^Y_+| + |\lambda^Y_-|\notag\\
&= (N-3)\frac{Q^2}{(N-1)^2} + \frac{Q(1-Q)}{N-1} + |\lambda^Y_+| + |\lambda^Y_-|.
\end{align}

Thus, we conclude:
\begin{align}
\Delta &= \frac{1}{2N}\sum_{a,c=0}^{N-1}\Delta_{a,c} = \frac{1}{2N}\sum_a\Delta_{a,a} + \frac{1}{2N}\sum_{a\ne c}\Delta_{a,c}\\
&= \frac{1}{2}\left((N-2)\frac{Q^2}{(N-1)^2} + |\lambda^X_+| + |\lambda^X_-| + (N-1)\left[(N-3)\frac{Q^2}{(N-1)^2} + \frac{Q(1-Q)}{N-1} + |\lambda_+^Y| + |\lambda_-^Y|\right]\right).
\end{align}
At first glance, this expression may seem to scale exponentially with $n$ (since $N = 2^n$).  However, note that $\lambda_\pm$ (for both the $X$ and $Y$ operators) are multiples of $\alpha$, which, itself, is a multiple of $1/(N-1)$.


Returning to $\ps$, we apply the Winter continuity bound (Equation \ref{eq:cont-bound}) to attain:
\begin{align}
H(A_1^Z|E)_\mu &\ge H(A_1^Z|E)_\rho - \Delta\log N - (1+\Delta)H\left(\frac{\Delta}{1+\Delta}\right)\notag\\
&\ge n - H(A_1^F|A_2^F)_\rho - \Delta\log N - (1+\Delta)H\left(\frac{\Delta}{1+\Delta}\right)\notag\\
&\ge n(1 - \Delta) - (1+\Delta)H\left(\frac{\Delta}{1+\Delta}\right) - H(A_1^F|A_2^F)_\rho.
\end{align}
To finish the key-rate computation, we need $H(A_1^F|A_2^F)_\rho$ and $H(A_1^Z|B^Z)_\mu$.  The first is determined through the observed values $p_{i,j}^F$, which we use to denote the probability that $A$ observes $\ket{F_i}$ (in $A_1$) and $\ket{F_j}$ (in $A_2$) conditioned on the event $B$ choose $\R$; the second is determined through the observed values $p_{i,j}^Z$ which we use to denote the probability that $B$ observes $\ket{j}$ and $A$ observes $\ket{i}$ in $A_1$ (we use $A_1$ as this is the register used for key-distillation) conditioned on the event $B$ choose $\MR$.  In both cases, $i,j \in \{0,1,\cdots,N-1\}$.  Clearly these are observable values allowing $A$ and $B$ to compute these final (classical) entropy expressions.  Since we are considering a symmetric attack modeled by the depolarization channel described in Equation \ref{eq:pxy}, we have $p_{i,j}^Z = \frac{1}{N}p(j|i)$ and so we compute the joint entropy as:
\begin{align*}
H(A_1^ZB^Z)_\mu &= -\sum_{i,j}p_{i,j}^Z \log p_{i,j}^Z = -\sum_i \frac{1}{N}p(i|i)\log_2 p(i|i) - \sum_{i\ne j} \frac{1}{N}p(j|i)\log_2p(j|i)\\
&=(1-Q)\log_2\frac{1-Q}{N} - Q\log\frac{Q}{N(N-1)}\\
&= n + Q\log_2(N-1) + H(Q).
\end{align*}
It is not difficult to show that $H(B^Z)_\mu = n$ (since the attack is symmetric, $B$'s probability of observing any particular value $\ket{j}$ is uniform).  Thus the conditional entropy is simply:
\[
H(A_1^Z|B^Z) = Q\log_2(N-1) + H(Q).
\]

The case for the $\X$ basis is identical, though we use a different noise parameter $Q_F$ to parameterize the channel in this case (since the noise may be different in the reflection case as discussed earlier).  In this case we have:
\[
H(A_1^F|A_2^F) = Q_F\log_2(N-1) + H(Q_F).
\]

Our final key-rate expression, therefore is:
\begin{align}
r &=H(A_1^Z|E) - H(A_1^Z|B^Z)\notag\\
&\ge n(1 - \Delta) - (1+\Delta)H\left(\frac{\Delta}{1+\Delta}\right) - (Q+Q_F)\log_2\left(2^n-1\right) - H(Q) - H(Q_F).\label{eq:keyrate-final}
\end{align}

Note that the above assumed collective attacks.  Ordinarily, one may extend such computations done for the collective attack case to prove security against arbitrary, general, attacks by using de Finetti style arguments or post-selection techniques \cite{de-finetti1,postselection1}.  We suspect that this result holds for our protocol, however we leave a complete proof of that for future work.

\subsection{Evaluation}
We evaluate our key-rate bound, Equation \ref{eq:keyrate-final}, in two scenarios.  First, we assume in the \emph{reflection} case, that the reverse channel is \emph{independent} of the forward and, so, $Q_F = 2Q(1-Q)$ shown in Figure \ref{fig:keyrate-ind}.  In the second \emph{dependent} case, we assume $Q_F = Q$ shown in Figure \ref{fig:keyrate-dep}.  We note that, similar to the fully-quantum case \cite{QKD-high-dim-dep,QKD-high-dim-50}, as the dimension increases, the noise tolerance also surpasses the single qubit case.  Thus, we prove that this high-dimensional advantage, known for fully-quantum protocols, also applies to the semi-quantum model.  We also observe numerically that, as $n$ increases, the maximal noise tolerance tends to approach $26\%$ in the independent case and $30\%$ in the dependent case.  As mentioned, fully-quantum high dimensional QKD protocols can tolerate up to $50\%$ error as the dimension increases; thus, while not as high as the fully-quantum case (which, perhaps, is to be expected), it is higher than any other semi-quantum protocol to-date.  Indeed, the highest known semi-quantum protocol \cite{SQKD-high-noise} can tolerate up to $17.8\%$ in the independent case (as opposed to $26\%$ here) and $26\%$ in the dependent case (as opposed to $30\%$ here).  Of course, our Equation \ref{eq:keyrate-final} is only a lower-bound - future work may improve this.  In particular, the use of mismatched measurements (needed to attain a high noise tolerance in \cite{SQKD-high-noise}) may greatly benefit our analysis here.  This we leave as an interesting future research direction.

\begin{figure}
  \centering
  \includegraphics[width=250pt]{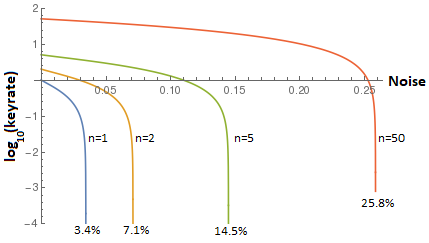}
  \caption{Key-rate of our high-dimensional SQKD protocol when $Q_F = 2Q(1-Q)$.  Here we plot the case for $n=1, 2, 5,$ and $50$.}\label{fig:keyrate-ind}
\end{figure}

\begin{figure}
  \centering
  \includegraphics[width=250pt]{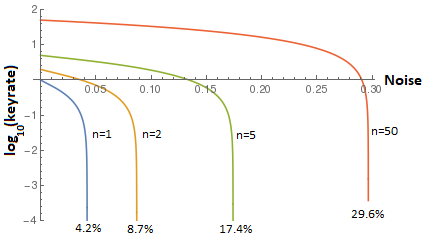}
  \caption{Key-rate of our high-dimensional SQKD protocol when $Q_F = Q$.  Here we plot the case for $n=1, 2, 5,$ and 50.}\label{fig:keyrate-dep}
\end{figure}

\section{Closing Remarks}

In this paper, we designed a new high-dimensional semi-quantum key distribution protocol and performed an information theoretic security analysis.  To conduct this security analysis, we developed several new techniques for high-dimensional protocols over two-way quantum channels which may be applicable to other (S)QKD protocols.  In particular we showed how one may reduce a two-way, high dimensional, semi-quantum protocol to a one-way protocol which is easier to analyze.   Thus, we produced new security results of broad application.  We also proved that high-dimensional quantum systems can benefit communication in the semi-quantum model just as they do in fully-quantum key distribution.

Many interesting future problems remain open.  For one thing, it would be interesting to see if our proof technique can be applied to the high-dimensional quantum-walk based SQKD protocol introduced in \cite{QKD-QW}.  If so, we would then be able to compare noise tolerance properties of the two protocols.  It would also be interesting to see if we can improve our bound and technique here.  One factor contributing to a potentially lower key-rate bound is our use of a continuity bound.  Other methods may produce more optimistic results.


\end{document}